\documentclass[final,3p,times,authoryear,11pt]{elsarticle}

\usepackage{amsmath,amssymb,amsthm}
\usepackage{graphicx}
\usepackage{microtype}
\usepackage{url}

\biboptions{authoryear,round}

\journal{Statistics \& Probability Letters}

\newtheorem{proposition}{Proposition}
\newtheorem{theorem}{Theorem}
\newtheorem{corollary}{Corollary}

\newcommand{\HC}{\mathrm{C}^{+}}
\newcommand{\E}{\mathbb{E}}
\newcommand{\Pp}{\mathbb{P}}
\newcommand{\Var}{\operatorname{Var}}
\newcommand{\med}{\operatorname{med}}
\newcommand{\sech}{\operatorname{sech}}

\begin{document}

\begin{frontmatter}


\title{
Pathwise Shrinkage and Probabilistic Calibration of the
Multiplicative Half-Cauchy Process
}

\author[aff1]{Siddhesh Kulkarni\corref{cor1}}
\ead{siddhesh.s.kulkarni@gmail.com}

\cortext[cor1]{Corresponding author}

\address[aff1]{
Bristol Myers Squibb, 5 Giralda Farms, Madison, NJ 07940, USA
}


\begin{abstract}
We characterize the multiplicative half-Cauchy process through its log-scale random-walk representation, establishing shrinkage, expansion, and critical regimes, almost-sure summability, local rebounds, and effective dimension. We also derive a semi-analytic calibration rule linking interpretable prior probability statements to its scale parameter.
\end{abstract}

\begin{keyword}
Bayesian shrinkage \sep
multiplicative half-Cauchy process \sep
ordered shrinkage priors \sep
infinite factor models \sep
prior calibration \sep
effective dimension
\end{keyword}

\end{frontmatter}

\section{Introduction}

Bayesian over-complete factorization models often introduce more
latent components than are expected to be required and use ordered
shrinkage priors to suppress redundant ones. Examples include the
multiplicative gamma process
\citep{bhattacharya2011,durante2017},
cumulative shrinkage \citep{legramanti2020}, and generalized
infinite factorizations \citep{schiavon2022}.
A practical challenge is to understand both the effective dimension
induced by such priors and how their hyperparameters should be
selected.

The multiplicative half-Cauchy process (MHCP) and its use within a Bayesian factor model for sparse canonical correlation analysis were first developed in \citet{Kulkarni2022} and later published in \citet{kulkarni2026}. In its motivating model,
\begin{equation}
a_{jh}^{(m)}
\mid \tau_m,\eta_h,\lambda_{jh}^{(m)}
\sim
N\left(
0,\tau_m^2\eta_h^2\{\lambda_{jh}^{(m)}\}^2
\right).
\label{eq:loading}
\end{equation}
Thus $\eta_h$ is a column-specific standard-deviation multiplier.
Earlier work established full support, increasing concentration near
zero, upper-tail flexibility, and
$\med(\eta_h)=\zeta^{h-1}$.

Here we study the MHCP itself. We derive its pathwise phase behavior,
summability and threshold-exceedance properties, local rebound
probabilities, and an interpretable calibration for $\zeta$.
Throughout, ``increasing shrinkage'' refers to decreasing typical
scale and long-run behavior, not monotone sample paths or
first-order stochastic ordering.

\section{Log-scale representation}

The MHCP of \citet{kulkarni2026} is
\begin{equation}
\widetilde{\eta}_1=1,
\qquad
\widetilde{\eta}_h\sim\HC(0,\zeta),
\quad h\geq2,
\label{eq:increments}
\end{equation}
independently, with
\begin{equation}
\eta_h
=
\prod_{\ell=1}^{h}
\widetilde{\eta}_{\ell}.
\label{eq:mhcp}
\end{equation}

For $h\geq2$, define
\[
U_h=\widetilde{\eta}_h/\zeta,
\qquad
Y_h=\log U_h.
\]

\begin{proposition}[Centered log representation]
\label{prop:log}

The variables $Y_2,Y_3,\ldots$ are i.i.d. with
\begin{equation}
f_Y(y)
=
\frac{1}{\pi\cosh y},
\qquad
\phi_Y(t)
=
\sech\left(\frac{\pi t}{2}\right),
\label{eq:logdensity}
\end{equation}
and
\begin{equation}
\log\eta_h
=
(h-1)\log\zeta
+
\sum_{\ell=2}^{h}Y_\ell,
\label{eq:decomp}
\end{equation}
where
\[
\E(Y_\ell)=0,
\qquad
\Var(Y_\ell)=\frac{\pi^2}{4}.
\]

\end{proposition}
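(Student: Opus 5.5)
We first settle independence and the decomposition. Since the $\widetilde{\eta}_h$, $h\geq2$, are independent and identically distributed $\HC(0,\zeta)$, the transformed variables $U_h=\widetilde{\eta}_h/\zeta$ are i.i.d. $\HC(0,1)$, and the $Y_h=\log U_h$ are i.i.d. as measurable functions of independent variables. Taking logarithms in \eqref{eq:mhcp} and using $\widetilde{\eta}_1=1$ gives
\[
\log\eta_h=\sum_{\ell=2}^h\log\widetilde{\eta}_\ell=\sum_{\ell=2}^h\bigl(\log\zeta+Y_\ell\bigr),
\]
which is \eqref{eq:decomp}. All the real content is therefore in the law of a single $Y=\log U$ with $U\sim\HC(0,1)$.

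For the density, start from $f_U(u)=2/\{\pi(1+u^2)\}$ on $u>0$ and change variables with $u=e^y$, $du=e^y\,dy$. This gives
\[
f_Y(y)=\frac{2e^y}{\pi(1+e^{2y})}=\frac{2}{\pi(e^{-y}+e^{y})}=\frac{1}{\pi\cosh y}.
\]
The density is even, and it decays like $2e^{-|y|}/\pi$. All moments therefore exist, and $\E(Y)=0$ by symmetry.

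For the characteristic function we need $\int_{-\infty}^\infty e^{ity}/(\pi\cosh y)\,dy$. My plan is to avoid contour integration and use the Mellin identity instead. Writing $e^{ity}=u^{it}$, this integral equals
\[
\E(U^{it})=\frac{2}{\pi}\int_0^\infty \frac{u^{it}}{1+u^2}\,du.
\]
The standard Beta-function identity $\int_0^\infty u^{s-1}/(1+u^2)\,du=\pi/\{2\sin(\pi s/2)\}$ holds for $0<\Re s<2$; we apply it at $s=1+it$, where $\Re s=1$. Since $\sin(\pi/2+i\pi t/2)=\cos(i\pi t/2)=\cosh(\pi t/2)$, we obtain $\phi_Y(t)=\sech(\pi t/2)$. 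Alternatively, a rectangular contour of height $\pi$ picks up the single pole at $i\pi/2$ and gives the same result. This is the step needing the most care, since the identity must be justified on the line $\Re s=1$; analytic continuation or direct contour integration handles that.

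Finally, expand $\sech x=1-x^2/2+O(x^4)$ to get $\phi_Y(t)=1-\pi^2t^2/8+O(t^4)$. Since $\phi_Y(t)=1+it\E Y-t^2\E(Y^2)/2+\cdots$ and all moments are finite, matching coefficients gives $\E Y=0$ and $\E Y^2=\pi^2/4$. Together with the zero mean, this yields $\Var(Y_\ell)=\pi^2/4$.
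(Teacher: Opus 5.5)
Your proof is correct, and its skeleton matches the paper's. The same change of variables $u=e^y$ gives the density, and the decomposition follows from $\widetilde{\eta}_h=\zeta U_h$. Your explicit remark that the $Y_h$ are i.i.d.\ as functions of independent variables is a step the paper leaves tacit.

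The difference lies in the two analytic claims. The paper does not derive $\phi_Y$; it quotes $\sech(\pi t/2)$ from the hyperbolic-secant literature. It also states $\E(Y_\ell)=0$ and $\Var(Y_\ell)=\pi^2/4$ without argument, implicitly using that $2Y/\pi$ has the standard hyperbolic-secant density $\tfrac12\sech(\pi x/2)$, which has unit variance. Your Mellin/Beta computation and the Taylor matching of $\sech$ make the proposition self-contained at the cost of a few lines. The paper's version is shorter but outsources precisely the nontrivial steps.

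One simplification: the point you flag as delicate is not. Substituting $v=u^2$ turns $\int_0^\infty u^{it}(1+u^2)^{-1}\,du$ into $\tfrac12\int_0^\infty v^{a-1}(1+v)^{-1}\,dv=\tfrac12 B(a,1-a)=\pi/\{2\sin(\pi a)\}$ with $a=(1+it)/2$. Both the Beta integral and Euler's reflection formula hold for complex $a$ with $0<\Re a<1$. Since $\Re a=1/2$, the line $\Re s=1$ lies strictly inside the strip of validity, and no continuation argument is needed.
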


\begin{proof}
For $U\sim\HC(0,1)$, the transformation $Y=\log U$ gives
\[
f_Y(y)
=
\frac{2e^y}{\pi(1+e^{2y})}
=
\frac{1}{\pi\cosh y}.
\]
Its characteristic function is
$\sech(\pi t/2)$
\citep{harkness1968,ding2014three}.
Equation \eqref{eq:decomp} follows from
$\widetilde{\eta}_h=\zeta U_h$.
\end{proof}

Consequently,
\begin{equation}
\E(\log\eta_h)
=
(h-1)\log\zeta,
\qquad
\Var(\log\eta_h)
=
(h-1)\frac{\pi^2}{4}.
\label{eq:logmoments}
\end{equation}

Thus $\zeta$ changes the global log-scale drift without changing the
centered increment law. Symmetry gives
\begin{equation}
\med(\eta_h)=\zeta^{h-1}.
\label{eq:median}
\end{equation}

For $-1<q<1$,
\begin{equation}
E(\eta_h^q)
=
\zeta^{q(h-1)}
\left\{
\sec\left(\frac{\pi q}{2}\right)
\right\}^{h-1},
\label{eq:fractional-moment}
\end{equation}
while $E(\eta_h)=\infty$ for $h\geq2$.

\section{Local and pathwise behavior}

The process need not decrease from one component to the next.

\begin{proposition}[Multiplicative excursions]
\label{prop:excursion}

For $h\geq2$ and $c>0$,
\begin{equation}
\Pp(\eta_h>c\eta_{h-1})
=
\frac{2}{\pi}
\arctan\left(
\frac{\zeta}{c}
\right).
\label{eq:excursion}
\end{equation}

\end{proposition}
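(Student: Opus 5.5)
The plan is to reduce the event to a statement about the single increment $\widetilde{\eta}_h$. From the product definition \eqref{eq:mhcp}, $\eta_h=\eta_{h-1}\widetilde{\eta}_h$ for every $h\geq2$, and $\eta_{h-1}$ is almost surely strictly positive: for $h=2$ it equals $\widetilde{\eta}_1=1$, and for $h\geq3$ it is a finite product of half-Cauchy variables, each positive with probability one. Dividing by $\eta_{h-1}$ therefore shows that, up to a null set,
\[
\{\eta_h>c\eta_{h-1}\}=\{\widetilde{\eta}_h>c\}.
\]
The dependence on the past through $\eta_{h-1}$ drops out, so independence of the increments is not even needed. It only confirms that the event is also independent of $(\eta_1,\dots,\eta_{h-1})$, which is worth noting for later use.

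Next I compute the tail of $\widetilde{\eta}_h\sim\HC(0,\zeta)$. Its density is $2\zeta/\{\pi(\zeta^2+x^2)\}$ on $(0,\infty)$, so
\[
\Pp(\widetilde{\eta}_h>c)=\frac{2}{\pi}\left\{\frac{\pi}{2}-\arctan\left(\frac{c}{\zeta}\right)\right\}.
\]
Since $c/\zeta>0$, the identity $\arctan x+\arctan(1/x)=\pi/2$ for $x>0$ converts this to $\frac{2}{\pi}\arctan(\zeta/c)$, which is \eqref{eq:excursion}.

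The same result can be checked through Proposition~\ref{prop:log}. The event is $\{Y_h>\log(c/\zeta)\}$. Integrating $f_Y(y)=1/(\pi\cosh y)$ with the antiderivative $\frac{2}{\pi}\arctan(e^y)$ gives $1-\frac{2}{\pi}\arctan(c/\zeta)$, the same expression as before.

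There is no real obstacle here. The only points needing care are the almost-sure positivity of $\eta_{h-1}$, which justifies the division, and the arctangent complement identity. Finally, I would note that $c=1$ gives probability $\frac{2}{\pi}\arctan\zeta$. This is bounded away from zero for every $\zeta$, which is what supports the remark that sample paths need not be monotone.
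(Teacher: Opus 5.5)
Your proof is correct and follows the paper's argument: both reduce the event to $\{\widetilde{\eta}_h>c\}$ via $\eta_h/\eta_{h-1}=\widetilde{\eta}_h$ and then read off the half-Cauchy tail. Your remark on the almost-sure positivity of $\eta_{h-1}$ and the log-scale cross-check are harmless additions.
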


\begin{proof}
Since
$\eta_h/\eta_{h-1}=\widetilde{\eta}_h$,
the result follows directly from the half-Cauchy distribution
function.
\end{proof}

For $\zeta=0.7$, one-, two-, and five-fold upward moves have
probabilities approximately $0.389$, $0.214$, and $0.089$,
respectively. Hence substantial local rebounds remain possible even
under global shrinkage.

\begin{theorem}[Pathwise phase behavior]
\label{thm:phase}

As $h\to\infty$,
\begin{equation}
\frac{\log\eta_h}{h-1}
\longrightarrow
\log\zeta,
\qquad
\eta_h^{1/(h-1)}
\longrightarrow
\zeta
\quad\text{a.s.}
\label{eq:phase}
\end{equation}

Hence
\[
0<\zeta<1
\Longrightarrow
\eta_h\to0,
\qquad
\zeta>1
\Longrightarrow
\eta_h\to\infty
\quad\text{a.s.}
\]
At $\zeta=1$,
\begin{equation}
\liminf_{h\to\infty}\eta_h=0,
\qquad
\limsup_{h\to\infty}\eta_h=\infty
\quad\text{a.s.}
\label{eq:critical}
\end{equation}

\end{theorem}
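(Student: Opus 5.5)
The proof works entirely with the log-scale random walk of Proposition~\ref{prop:log}. Write $S_h=\sum_{\ell=2}^{h}Y_\ell$, so that by \eqref{eq:decomp}
\[
\frac{\log\eta_h}{h-1}=\log\zeta+\frac{S_h}{h-1}.
\]
By Proposition~\ref{prop:log}, the $Y_\ell$ are i.i.d.\ with $\E|Y_\ell|<\infty$; indeed the density $1/(\pi\cosh y)$ decays exponentially, so all moments are finite, and $\E(Y_\ell)=0$. Kolmogorov's strong law of large numbers therefore gives $S_h/(h-1)\to0$ almost surely, which is the first limit in \eqref{eq:phase}. Exponentiating, and using continuity of $\exp$, gives $\eta_h^{1/(h-1)}\to\zeta$ almost surely.

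For the two non-critical regimes, suppose $0<\zeta<1$, so that $\log\zeta<0$. On the almost-sure event where the strong law holds, there is a random $H$ such that $\log\eta_h\le (h-1)\log\zeta/2$ for all $h\ge H$. Hence $\log\eta_h\to-\infty$ and $\eta_h\to0$. The case $\zeta>1$ is symmetric: $\log\eta_h\ge (h-1)\log\zeta/2\to\infty$ eventually.

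The critical case $\zeta=1$ is the only step that needs more than the strong law, because then $\log\eta_h=S_h$ is a mean-zero random walk with finite positive variance $\pi^2/4$. I will show $\limsup_h S_h=+\infty$ and $\liminf_h S_h=-\infty$ almost surely, which gives \eqref{eq:critical} after exponentiating.

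The most direct route is the Hartman--Wintner law of the iterated logarithm, which applies since $\Var(Y_\ell)<\infty$:
\[
\limsup_{h\to\infty}\frac{S_h}{\sqrt{2(h-1)\log\log (h-1)}}=\frac{\pi}{2}\quad\text{a.s.},
\]
together with the corresponding statement for the $\liminf$ with $-\pi/2$. These immediately force $S_h$ to be unbounded above and below.

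As a self-contained alternative, I would use the Hewitt--Savage zero--one law. The event $\{\limsup_h S_h=\infty\}$ is exchangeable, so it has probability $0$ or $1$. By the central limit theorem, $\Pp(S_h>M\sqrt{h})\to 1-\Phi(2M/\pi)>0$ for each fixed $M$. Fatou's lemma applied to these events then gives
\[
\Pp\Bigl(\limsup_h S_h/\sqrt{h}\ge M\Bigr)>0.
\]
The event $\{\limsup_h S_h/\sqrt{h}\ge M\}$ is also exchangeable, so its probability is $1$ for every $M$. This yields $\limsup_h S_h=\infty$ almost surely. The symmetry of $f_Y$ transfers the same argument to the $\liminf$. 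I expect this critical-case step to be the only real obstacle, and I would simply cite the law of the iterated logarithm (or the Chung--Fuchs recurrence theory for mean-zero walks) rather than reprove it.
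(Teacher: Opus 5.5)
Your proposal is correct and follows the paper's proof: the strong law applied to \eqref{eq:decomp} gives \eqref{eq:phase} and both non-critical regimes, and at $\zeta=1$ a standard oscillation result for a mean-zero, finite-variance random walk finishes the argument. Citing the Hartman--Wintner law of the iterated logarithm, or using your CLT plus reverse-Fatou plus zero--one-law argument (which is sound), instead of the paper's Chung--Fuchs reference is only a different standard justification of that same step.
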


\begin{proof}
Apply the strong law to \eqref{eq:decomp}. At $\zeta=1$,
$\log\eta_h$ is a zero-drift finite-variance random walk and
oscillates almost surely \citep{chungfuchs1951}.
\end{proof}

\begin{corollary}[Summability and finite exceedances]
\label{cor:summability}

If $0<\zeta<1$, then for every $r>0$,
\begin{equation}
\sum_{h=1}^{\infty}\eta_h^r<\infty
\quad\text{a.s.},
\qquad
N_\varepsilon
=
\#\{h:\eta_h>\varepsilon\}<\infty
\quad\text{a.s.}
\label{eq:summability}
\end{equation}

\end{corollary}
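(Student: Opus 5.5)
The plan is to deduce both claims from the almost-sure root-test behaviour in Theorem~\ref{thm:phase}, working pathwise on the probability-one event where $\eta_h^{1/(h-1)}\to\zeta$. On that event the whole argument is a deterministic comparison with a geometric series.

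Fix $r>0$ and $0<\zeta<1$, and choose any $\rho$ with $\zeta<\rho<1$. Since $\eta_h^{1/(h-1)}\to\zeta<\rho$, there is a random but almost surely finite index $H$ such that $\eta_h\le\rho^{h-1}$ for all $h\ge H$. Then
\[
\sum_{h\ge H}\eta_h^r\le\sum_{h\ge H}\rho^{r(h-1)}<\infty ,
\]
because $\rho^r<1$. The finitely many terms with $h<H$ are finite, since each $\widetilde{\eta}_\ell$ is almost surely finite. Equivalently, the root test applies directly: $(\eta_h^r)^{1/(h-1)}\to\zeta^r<1$. This gives the first part of \eqref{eq:summability}.

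For the exceedance count, fix $\varepsilon>0$. By Theorem~\ref{thm:phase}, $\eta_h\to0$ almost surely, so on that event only finitely many $h$ can satisfy $\eta_h>\varepsilon$, and hence $N_\varepsilon<\infty$. Alternatively, $N_\varepsilon\le\varepsilon^{-r}\sum_h\eta_h^r$, so finiteness also follows from the first part. As a complementary non-pathwise check, one can bound $\E(N_\varepsilon)$ using Markov's inequality together with \eqref{eq:fractional-moment}. With $0<q<1$,
\[
\Pp(\eta_h>\varepsilon)\le\varepsilon^{-q}\{\zeta^q\sec(\pi q/2)\}^{h-1}.
\]
For small $q$ we have $\zeta^q\sec(\pi q/2)=1+q\log\zeta+O(q^2)<1$, so this bound is summable in $h$. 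Borel--Cantelli then gives the same conclusion, and moreover $\E(N_\varepsilon)<\infty$.

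There is no real obstacle here. The only point needing care is that $H$ is random, so the comparison must be made pathwise on the almost-sure event from Theorem~\ref{thm:phase} rather than in expectation. The expectation route works only for fractional moments, because $\E(\eta_h)=\infty$.
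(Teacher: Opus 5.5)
Your proof is correct and follows the paper's route: the root test is applied pathwise using $\eta_h^{1/(h-1)}\to\zeta<1$ from Theorem~\ref{thm:phase}, and finiteness of $N_\varepsilon$ follows immediately. Your supplementary Markov and Borel--Cantelli check is the same fractional-moment argument the paper uses later to show $\E(N_\varepsilon)<\infty$.
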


\begin{proof}
The root test follows from
$\eta_h^{1/(h-1)}\to\zeta<1$; finite threshold exceedances are then
immediate.
\end{proof}

For $r=2$, this applies directly to the column-specific variance
multiplier in \eqref{eq:loading}. The result is almost sure rather
than integrable: $\E(\eta_h^2)=\infty$ for $h\geq2$, so
$\E(\sum_h\eta_h^2)=\infty$ despite almost-sure summability.

\begin{theorem}[Central limit theorem]
\label{thm:clt}

As $h\to\infty$,
\begin{equation}
\frac{
\log\eta_h-(h-1)\log\zeta
}{
(\pi/2)\sqrt{h-1}
}
\xrightarrow{d}
N(0,1).
\label{eq:clt}
\end{equation}

\end{theorem}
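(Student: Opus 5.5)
The plan is to apply the classical Lindeberg--L\'evy central limit theorem directly to the centered log representation of Proposition~\ref{prop:log}. By \eqref{eq:decomp},
\[
\log\eta_h-(h-1)\log\zeta=\sum_{\ell=2}^{h}Y_\ell ,
\]
a sum of $n=h-1$ i.i.d. terms. Proposition~\ref{prop:log} gives $\E(Y_\ell)=0$ and $\Var(Y_\ell)=\pi^2/4$, which is finite. The normalization in \eqref{eq:clt} is exactly $\sigma\sqrt{n}$ with $\sigma=\pi/2$, so the standard theorem yields convergence in distribution to $N(0,1)$.

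To keep the argument self-contained and avoid citing moment conditions separately, I would also give a short characteristic-function verification. Using $\phi_Y(t)=\sech(\pi t/2)$ from \eqref{eq:logdensity}, the characteristic function of the normalized sum is
\[
\left\{\sech\left(\frac{t}{\sqrt{n}}\right)\right\}^{n},
\]
since $(\pi/2)\cdot t/\{(\pi/2)\sqrt n\}=t/\sqrt n$. The expansion $\cosh x=1+x^2/2+O(x^4)$ gives $\log\sech(x)=-x^2/2+O(x^4)$ as $x\to0$. Hence
\[
n\log\sech\left(\frac{t}{\sqrt n}\right)=-\frac{t^2}{2}+O\left(\frac{t^4}{n}\right)\to-\frac{t^2}{2}
\]
for each fixed $t$. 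L\'evy's continuity theorem then gives \eqref{eq:clt}.

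No step is a genuine obstacle. The only points needing care are:
\begin{itemize}
\item matching the index shift, since the sum has $h-1$ terms and the normalization uses $\sqrt{h-1}$;
\item confirming the variance $\pi^2/4$, which follows from $\phi_Y''(0)=-(\pi/2)^2$.
\end{itemize}
The heavy tails of $\eta_h$ itself, for which $\E(\eta_h)=\infty$, play no role, because all moments of $Y_\ell$ are finite: the density $1/(\pi\cosh y)$ decays exponentially.
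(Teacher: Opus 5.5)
Your proposal is correct and matches the paper's proof, which simply applies the classical central limit theorem to the i.i.d. sum $\sum_{\ell=2}^{h}Y_\ell$ in \eqref{eq:decomp} with $\E(Y_\ell)=0$ and $\Var(Y_\ell)=\pi^2/4$. Your additional characteristic-function check, $\{\sech(t/\sqrt{n})\}^{n}\to e^{-t^2/2}$ with $n=h-1$, is a valid self-contained confirmation of that same argument.
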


\begin{proof}
Apply the classical central limit theorem to the sum in
\eqref{eq:decomp}.
\end{proof}

\section{Probability calibration and effective dimension}

Let
\[
Z_H
=
\sum_{\ell=2}^{H}Y_\ell,
\qquad
H\geq2.
\]
Its characteristic function is
\[
\phi_{Z_H}(t)
=
\sech^{H-1}\left(
\frac{\pi t}{2}
\right).
\]

The Gil--Pelaez inversion formula
\citep{gilpelaez1951} gives
\begin{equation}
F_H(z)
=
\frac12
+
\frac{1}{\pi}
\int_0^\infty
\frac{\sin(tz)}{t}
\sech^{H-1}\left(
\frac{\pi t}{2}
\right)dt.
\label{eq:cdf}
\end{equation}

Let $q_{p,H}=F_H^{-1}(p)$.

\begin{proposition}[Probability-based calibration]
\label{prop:calibration}

For $H\geq2$,
\begin{equation}
Q_p(\eta_H)
=
\zeta^{H-1}
e^{q_{p,H}},
\qquad
\Pp(\eta_H\leq\varepsilon)=p
\Longrightarrow
\boxed{
\zeta
=
\exp\left\{
\frac{
\log\varepsilon-q_{p,H}
}{
H-1
}
\right\}.
}
\label{eq:calibration}
\end{equation}

\end{proposition}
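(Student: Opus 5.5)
The plan is to work entirely on the log scale, using the decomposition \eqref{eq:decomp} of Proposition~\ref{prop:log}, and to use the fact that quantiles commute with strictly increasing maps. First I will write $\log\eta_H=(H-1)\log\zeta+Z_H$, where $Z_H=\sum_{\ell=2}^{H}Y_\ell$. The map $z\mapsto\zeta^{H-1}e^{z}$ is continuous and strictly increasing, and $\eta_H=\zeta^{H-1}e^{Z_H}$. Hence $\Pp(\eta_H\le \zeta^{H-1}e^{z})=\Pp(Z_H\le z)=F_H(z)$ for every $z$. Taking $z=q_{p,H}$ gives $Q_p(\eta_H)=\zeta^{H-1}e^{q_{p,H}}$.

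The one point that needs care is that $q_{p,H}=F_H^{-1}(p)$ is well defined, i.e.\ that $F_H$ is a continuous, strictly increasing bijection of $\mathbb{R}$ onto $(0,1)$. I will settle this directly. $Z_H$ is a sum of i.i.d.\ variables, each with the everywhere-positive continuous density $1/(\pi\cosh y)$. Its density is therefore an iterated convolution of strictly positive integrable densities, which is continuous and strictly positive on $\mathbb{R}$. So $F_H$ is strictly increasing and continuous, with limits $0$ and $1$, and every $p\in(0,1)$ has a unique quantile. Alternatively, since $\sech^{H-1}(\pi t/2)$ is integrable, Fourier inversion gives a continuous density. Either route also justifies the formula \eqref{eq:cdf}, although the proposition only needs existence of $q_{p,H}$. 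By symmetry of $f_Y$, $Z_H$ is symmetric, so $q_{1/2,H}=0$, which recovers \eqref{eq:median} as a check.

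For the calibration implication, I will fix $\varepsilon>0$ and $p\in(0,1)$. Then
\[
\Pp(\eta_H\le\varepsilon)=\Pp\{Z_H\le \log\varepsilon-(H-1)\log\zeta\}=F_H\{\log\varepsilon-(H-1)\log\zeta\}.
\]
Setting this equal to $p$ and applying the strict monotonicity just established gives $\log\varepsilon-(H-1)\log\zeta=q_{p,H}$. Since $H\ge2$, I can solve for $\log\zeta$ and exponentiate to obtain the boxed formula.

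I will also note the converse: for any $\varepsilon,p$ this $\zeta$ is positive and does satisfy the constraint. The probability statement therefore determines $\zeta$ uniquely, because $\zeta\mapsto\Pp(\eta_H\le\varepsilon)$ is strictly decreasing. The main obstacle is only the invertibility of $F_H$; everything else is algebra.
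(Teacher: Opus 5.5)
Your proof is correct and follows the paper's route: write $\eta_H=\zeta^{H-1}e^{Z_H}$, pass the quantile through the increasing map $z\mapsto\zeta^{H-1}e^{z}$, and solve $Q_p(\eta_H)=\varepsilon$ for $\zeta$. You also check that $F_H$ is continuous and strictly increasing, since $Z_H$ has a positive continuous convolution density, so $q_{p,H}$ is well defined and $\zeta$ is unique; the paper leaves this implicit.
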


\begin{proof}
Since
$\eta_H=\zeta^{H-1}\exp(Z_H)$,
monotonicity of the exponential gives the quantile expression;
solving $Q_p(\eta_H)=\varepsilon$ gives
\eqref{eq:calibration}.
\end{proof}

The calibrated value satisfies $\zeta<1$ if and only if
$\log\varepsilon<q_{p,H}$.

Symmetry implies
\[
q_{1-p,H}=-q_{p,H},
\qquad
q_{0.5,H}=0,
\]
recovering median calibration
\begin{equation}
\zeta
=
\varepsilon^{1/(H-1)}.
\label{eq:medianCalibration}
\end{equation}

For fixed $h$, $\eta_h$ is increasing in $\zeta$ under a
common-increment coupling. Therefore
$\Pp(\eta_H\leq\varepsilon)$ is decreasing in $\zeta$, guaranteeing
uniqueness of the calibration.

Equation \eqref{eq:calibration} is a marginal statement about the
specific column $H$; it does not imply that all subsequent columns
remain below $\varepsilon$. In the motivating model,
$\Pp(\eta_H\leq\varepsilon)=p$ means that the MHCP contribution to
the column-$H$ standard deviation is at most $\varepsilon$ with
probability $p$.

For example,
\[
H=10,\qquad
\varepsilon=0.05,\qquad
p=0.90
\]
gives
\begin{equation}
q_{0.90,10}\approx5.967,
\qquad
\zeta\approx0.369.
\label{eq:workedCalibration}
\end{equation}
Median calibration instead gives
$\zeta\approx0.717$.

A Monte Carlo check using $2\times10^5$ draws gave empirical probabilities
$0.5001$, $0.7483$, and $0.9012$ for target probabilities
$0.50$, $0.75$, and $0.90$, respectively.

For large $H$, Theorem~\ref{thm:clt} gives
\begin{equation}
q_{p,H}
\approx
\frac{\pi}{2}
\sqrt{H-1}\,
\Phi^{-1}(p).
\label{eq:qapprox}
\end{equation}
For $(\varepsilon,p)=(0.05,0.90)$, the semi-analytic and
normal-approximate values of $\zeta$ are respectively
$(0.1773,0.1728)$, $(0.3694,0.3665)$, and
$(0.5397,0.5382)$ at $H=5,10,$ and $20$.

For a global threshold-based summary, write
\[
N_\varepsilon
=
\sum_{h=1}^{\infty}
\mathbf{1}\{\eta_h>\varepsilon\}.
\]
Tonelli's theorem gives
\begin{equation}
\E(N_\varepsilon)
=
\mathbf{1}\{\varepsilon<1\}
+
\sum_{h=2}^{\infty}
\left[
1-
F_h\{
\log\varepsilon-(h-1)\log\zeta
\}
\right].
\label{eq:expectedN}
\end{equation}

For $0<\zeta<1$, this expectation is finite. Indeed, for sufficiently
small $q\in(0,1)$,
\[
\rho_q
=
\zeta^q
\sec\left(
\frac{\pi q}{2}
\right)
<1,
\]
and Markov's inequality together with \eqref{eq:fractional-moment} gives
\[
\Pp(\eta_h>\varepsilon)
\leq
\varepsilon^{-q}
\rho_q^{\,h-1},
\]
which is geometrically summable.

Thus
$\Pp(\eta_H\leq\varepsilon)=p$
and $N_\varepsilon$ answer complementary questions: the former
controls one chosen column, whereas the latter summarizes overall
threshold-based effective dimension.

Numerical evaluation of \eqref{eq:expectedN} gives
$\E(N_{0.05})\approx4.75$ for the worked calibration, whereas
median calibration gives approximately $20.63$.

\begin{table}[t]
\centering
\caption{
Effect of the target probability on MHCP calibration and
threshold-based effective dimension for
$H=10$ and $\varepsilon=0.05$.
}
\label{tab:design}
\begin{tabular}{ccc}
\hline
$p$ & Calibrated $\zeta$ & $\E(N_{0.05})$ \\
\hline
0.50 & 0.7169 & 20.63 \\
0.75 & 0.5077 & 7.60 \\
0.90 & 0.3694 & 4.75 \\
\hline
\end{tabular}
\end{table}

Increasing the target probability therefore induces more aggressive
shrinkage and a substantially smaller threshold-based effective
dimension.

\section{Relation to existing shrinkage processes}

For a positive multiplicative process
$\Lambda_h=\prod_{\ell=1}^{h}D_\ell$ with i.i.d. increments
satisfying $\E|\log D_1|<\infty$,
\[
h^{-1}\log\Lambda_h
\longrightarrow
\E\{\log D_1\}
\qquad\text{a.s.}
\]
The multiplicative gamma process
\citep{bhattacharya2011,durante2017}
uses gamma increments, whereas cumulative shrinkage uses an ordered
mixture mechanism \citep{legramanti2020}. The MHCP has the
particularly transparent representation \eqref{eq:decomp}, in which
$\zeta$ changes the drift without changing the centered increment
law. These are structural prior comparisons rather than claims of
universal posterior superiority.

\section{Illustration and discussion}

\begin{figure}[!t]
\centering
\includegraphics[width=0.94\textwidth]{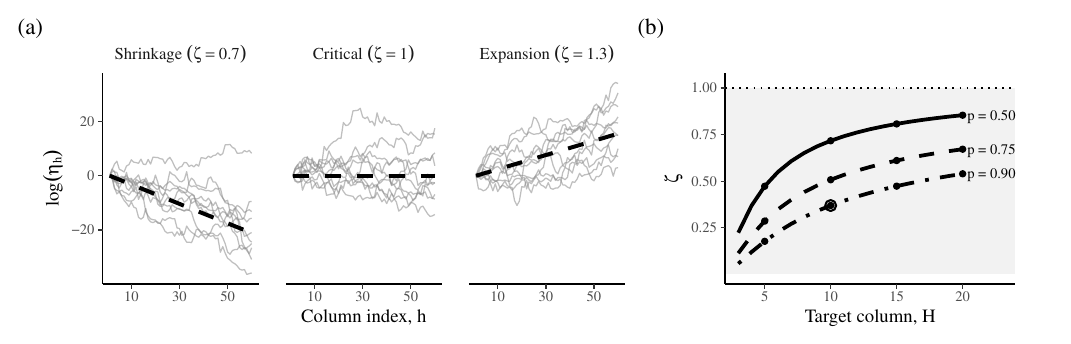}
\caption{MHCP pathwise behavior and calibration.
(a) Ten simulated log-scale trajectories in shrinking ($\zeta=0.7$),
critical ($\zeta=1$), and expanding ($\zeta=1.3$) regimes, with theoretical
median trends; simulation seed 2026.
(b) Values of $\zeta$ satisfying $P(\eta_H\leq0.05)=p$ for
$p=0.50$, $0.75$, and $0.90$. At $H=10$, the corresponding values are
$0.7169$, $0.5077$, and $0.3694$. The dotted line marks $\zeta=1$; the
open circle denotes $(H,p)=(10,0.90)$.}
\label{fig:main}
\end{figure}

Figure~\ref{fig:main} summarizes the main interpretation.
Individual paths can exhibit substantial local variation while their
long-run behavior is governed by $\zeta$, and interpretable
column-specific probability statements can be converted directly
into prior choices.
This combination of almost-sure summability with infinite marginal
moments illustrates how the MHCP can enforce strong pathwise
regularization without eliminating heavy-tailed prior protection
against occasional large components.

For $\zeta<1$, an infinite MHCP sequence is almost surely effectively
finite at every fixed positive threshold while retaining explicit
probabilities of local rebounds. The marginal calibration
$\Pp(\eta_H\leq\varepsilon)=p$ and the global summary
$N_\varepsilon$ therefore provide complementary views of shrinkage
and effective dimension.

Although motivated by canonical correlation analysis, the same
ordered-scale mechanism applies to Bayesian factor, matrix, tensor,
and multi-view latent representations. An accompanying R package,
\texttt{mhcp}, provides simulation, probability calibration,
threshold-exceedance summaries, and reproducible figure code. Version 0.1.0 is archived on Zenodo (DOI: 10.5281/zenodo.23355811) and development is available on GitHub:
\url{https://github.com/SiddheshKulkarni-source/mhcp}.

\textbf{Declaration of competing interest}: The author declares no conflict of interest.



\end{document}